\documentclass{article}  

\usepackage{graphicx}
\usepackage{latexsym}
\usepackage{fullpage}


 \usepackage[utf8]{inputenc} 
 \usepackage[T1]{fontenc}    
\usepackage{hyperref}       
 \usepackage{url}            
\usepackage{amsfonts}       
 \usepackage{nicefrac}       
 \usepackage{microtype}      
 \usepackage{xcolor}         

 \usepackage{times}
 \usepackage{soul}
 \usepackage[small]{caption}
 \usepackage{graphicx}
 \usepackage{amsmath}
 \usepackage{amsthm}
 \usepackage{booktabs}
 \usepackage{algorithm}

\usepackage{authblk}


\setcounter{secnumdepth}{2} 

\usepackage[american]{babel}
\usepackage{mathtools} 
\usepackage{tikz} 



\usepackage[textwidth=2cm]{todonotes}

\usepackage{tikz}

\usepackage{commath}
\usepackage{algpseudocode}

\newtheorem{theorem}{Theorem}

\newtheorem{lemma}{Lemma}
\newtheorem{proposition}[lemma]{Proposition}
\theoremstyle{definition}

\newtheorem{definition}{Definition}

\newcommand{\sfNP}{\textsf{NP}}

\newcommand{\PFM}{{\em Proportionally Fair Matching}\xspace}
\newcommand{\DM}{{\em Diverse Matching}\xspace}

\usepackage{xspace}
\newcommand{\matchlb}{\DM}
\newcommand{\propfair}{\PFM}

\newcommand{\sol}{\mathcal{S}}

\title{Online Algorithms for Matchings with Proportional Fairness Constraints and Diversity Constraints\footnote{Author names are listed in alphabetical order.}}

\author[1]{Anand Louis}
\author[2]{Meghana Nasre}
\author[3]{Prajakta Nimbhorkar}
\author[4]{Govind~S.~Sankar}

\affil[1]{%
    Indian Institute of Science, Bangalore\\
    Bengaluru, India
    \thanks{anandl@iisc.ac.in},
}

\affil[2]{%
    Indian Institute of Technology Madras\\
    Chennai, India
    \thanks{meghana@cse.iitm.ac.in}
  }
 \affil[3]{%
    Chennai Mathematical Institute\\
    Chennai, India
    \thanks{prajakta@cmi.ac.in}
}
\affil[4]{%
    Duke University\\
    Durham, USA\\
    \thanks{govind.sankar@duke.edu}
}
\begin{document}
\maketitle
\begin{abstract}
Matching problems with group-fairness constraints and diversity constraints have numerous applications such as in allocation problems, committee selection, school choice, etc. Moreover, online matching problems have lots of applications in ad allocations and other e-commerce problems like product recommendation in digital marketing.
 
We study two problems involving assigning {\em items} to {\em platforms}, where items belong to various {\em groups} depending on their attributes; the set of items are available offline and the platforms arrive online. In the first problem, we study online matchings with {\em proportional fairness constraints}. Here, each platform on arrival should either be assigned a set of items in which the fraction of items from each group is within specified bounds or be assigned no items; the goal is to assign items to platforms in order to maximize the number of items assigned to platforms. 

In the second problem, we study online matchings with {\em diversity constraints}, i.e. for each platform, absolute lower bounds are specified for each group. Each platform on arrival should either be assigned a set of items that satisfy these bounds or be assigned no items; the goal is to maximize the set of platforms that get matched. We study approximation algorithms and hardness results for these problems. The technical core of our proofs is a new connection between these problems and the problem of matchings in hypergraphs.
 
Our experimental evaluation shows the performance of our algorithms on real-world and synthetic datasets exceeds our theoretical guarantees.

\end{abstract}


\section{Introduction}\label{sec:intro}
Matchings in graphs is an important problem in both theory and practice, and has received a
lot of attention in literature over several years. Computing a maximum matching in a bipartite graph under various constraints is the core of many
allocation applications like scheduling~\cite{venkat_scheduling}, school choice~\cite{abdulkadiroglu2003}, 
ad-auctions \cite{mehta_online_survey,mehta_online_adwords}, resource allocation \cite{halabian_resourceallocation},  
healthcare rationing~\cite{AB21} etc. The terminology in these papers varies, and we refer to the two parts of the
underlying bipartite graph with the terms {\em items} and {\em platforms}. In real-world applications, items may
be classified into different groups based on various properties they possess. Seeking to optimize just the cost or
utility of an allocation can be unfair to certain groups or may not serve the intended purpose. For instance, while forming committees from a pool of candidates, it is necessary that each committee contains candidates with expertise from all the relevant areas. This is the case, for example, while forming program committees of conferences, teams to work on projects, or expert committees to evaluate project proposals where it is required that a minimum number of members are picked from each of the relevant sub-areas. Additionally, each committee may have a limit on the maximum number of members it can accommodate from a particular sub-area.


In this paper, we consider the scenario where {\em items} need to be assigned to {\em platforms}, items are classified into groups, and the platforms have certain {\em fairness constraints} on the set of items that gets assigned to them from each group. 
The fairness constraints can be specified in terms of lower bounds on the number items that get assigned to a platform from each group.
However, in many applications, stating the fairness constraints in terms of absolute lower bounds is inadequate. For example, in case of school choice, the total number of applications may not be the same each year (see e.g. \cite{case-study}). Thus, some schools may not fill up all their seats, and specifying constraints on the number of students with each race, ethnicity, and economic background in terms of absolute values do not achieve the desired balance. Similarly, while selecting committees, the number of available candidates and their backgrounds may vary each year. To address this, we consider the notion of {\em proportional fairness}. Formally, our model and the problem definition are as follows:

\noindent {\bf Our model: }
 The input instance consists of a bipartite graph $G=(A\cup P, E)$ where $A$ is the set of items and $P$ is the set of platforms, and $(a,p)\in E$ if and only if item $a$ can be assigned to platform $p$. Let $N(p_j)$ denote the set of items adjacent to $p_j$ in $G$. Moreover, depending on the properties or attributes they possess, the items are classified into $m$ groups $C_1,\ldots,C_m$ where each $C_i\subseteq A$. 

 An assignment or matching $M$ of items to platforms is a subset of $E$. For a platform $p_j$, define $M_j=\{a\in A \mid (a,p_j)\in M\}$. Thus $M_j$ denotes the set of items assigned to $p_j$ in $M$.

We define two problems on this model, depending on the way fairness constraints are specified. The following problem has {\em proportional fairness constraints}:

 \noindent {\bf Proportionally fair matching problem: }
 The input instance is as described above.
 Every platform $p_j\in P$ has a lower bound $\ell_j$ and upper bound $u_j$, respectively denoting the minimum and maximum number of items that can be assigned to $p_j$. 
 Further, each platform $p_j$ has associated balance	parameters $\alpha_j^{(i)},\beta_j^{(i)}$ for each group $C_i$. It is essential that, when $N(p_j)\cap C_i=\emptyset$, $\alpha_j^{(i)},\beta_j^{(i)}=0$.
A platform $p_j$ is said to be {\em satisfied} by $M$ if the following holds:
 \begin{eqnarray*}
 \ell_j  \leq  \abs{M_j} &\leq & u_j \textrm{ and }
 \textrm{ for every }i\textrm{ s.t. }C_i\cap N(p_j)\neq \emptyset\\
\alpha_j^{(i)} \abs{M_j} &\leq & \abs{M_j\cap C_i}  \leq \beta_j^{(i)} \abs{M_j}
 \end{eqnarray*}
The goal is to compute an assignment $M$ of items to platforms such that the number of items assigned to {\em satisfied} platforms is maximized.
 
Additionally, we also consider the problem where the fairness constraints are specified in terms of absolute lower bounds, and the goal is to maximize the number of platforms whose lower bounds are met. We refer to this as the {\em diverse matching} problem. The objective of maximizing the number of satisfied platforms is motivated by its real-world applications like committee selection, where a committee cannot be set up unless all the constraints are met, or setting up teams to work on projects, as experts from all the areas are necessary for the completion of a project. The objective of maximizing the number of committees or teams formed is a natural one here.
The problem is formally defined below. 

\noindent{\bf Diverse matching problem: }
The input instance is the same as described previously.
Each platform specifies a lower bound $\ell_j^{(i)}$
on the number of items it needs from the groups $C_i$. 
It is essential that $\ell_j^{(i)}\leq |N(p_j)\cap C_i|$.
A platform $p_j$ is said to be satisfied if,
for the set of items $M_j$ assigned to it, the following holds: 
\begin{align*}
		\abs{M_j}\geq  \ell_j, \qquad \mathrm{ and } \qquad 
\forall~i,\abs{M_j\cap C_i}\geq \ell_j^{(i)}.
\end{align*}
The goal is to compute an assignment of items to platforms
such that the number of {\em satisfied } platforms is maximized. 

\noindent{\bf The online setting:}
 In many practical applications like ad-allocation, selecting teams from available pool of candidates, assigning item reviews to customers, ride sharing etc., all the platforms may not be known in advance. Our algorithms have the added advantage that they work in this setting where platforms arrive online over time and items are known in advance.
 
 Even though our problem formulations are of immense importance as exemplified by the numerous aforementioned special cases they generalize, there doesn't seem to be any prior work studying these problems at the level of generality of our formulations; see Section \ref{sec:related} for brief summary of the related work.
\subsection{Our Contributions}
We give hardness and approximation algorithms for the \propfair\ and \matchlb\ problems. 
Recall that $\ell_j$ is the minimum number of items required to be assigned to a platform $p_j$, and define $\ell=max_j \ell_j$
taken over all platforms. We state our results for the online setting, clearly they also apply to the {\em offline} setting, where the platforms and items both are known upfront. 

The main result of our paper is as follows:
\begin{theorem}\label{thm:prop-fair}
There is an $O(n^2)$-time online algorithm that outputs an assignment $M$ of items to platforms such that the number of items that get assigned in $M$ is a $2(\ell+1)$-approximation to the optimum solution of the \propfair\ problem, and the fairness constraint for any platform $p_j$ may be violated by at most a fraction $O\big(\frac{1}{\ell_j}\big)$.

Formally, for a group $C_i$ and a platform $p_j$ we have
%
\begin{align*}
		 \abs{M_j}\left(\alpha_j^{(i)}-\frac{3}{\ell_j}\right)   \leq & \abs{M_j\cap C_i} \leq   \abs{M_j}\left(\beta_j^{(i)}+\frac{3}{\ell_j}\right)
\end{align*}
 Here $n$ is the number of vertices in the underlying bipartite graph $G$. 
\end{theorem}
The theorem is proved in Section~\ref{sec:prop-fair}.

We give a slightly better approximation guarantee for the \matchlb\ problem, without any violation of constraints, even when an item can belong to multiple groups. Following is our result for the \matchlb\ problem, proved in Section~\ref{sec:lb}.
\begin{theorem}\label{thm:intro-matching-lb}
There is an online algorithm with competitive ratio $(\ell+1)$ for the \matchlb\ problem 
where $\ell=\max_{p_j}( \max( \ell_j, \sum_i \ell_j^{(i)} ))$.
\end{theorem}

We show that the approximation ratios in Theorems \ref{thm:prop-fair} and \ref{thm:intro-matching-lb} are almost tight by showing that it is NP-Hard to get an approximation ratio larger than $O(\frac{\ell}{\ln \ell})$. Theorem~\ref{thm:hardness-kdim} gives hardness of approximation for the \propfair\ and \matchlb\ problems. The proof is given in Section~\ref{sec:hardness}.  

\begin{theorem}\label{thm:hardness-kdim}
	The \propfair\ and the \matchlb\ problems are {\sf NP}-hard to approximate in polynomial time to within a factor of $O\left (\frac{\ell}{\ln \ell}\right )$ where $\ell=\max_{p_j} \ell_j$. The hardness result holds even when there is a trivial group containing all the items.
\end{theorem}

\noindent {\bf Experimental evaluation:} An experimental evaluation of our algorithms shows that their performance on  real-world and synthetic datasets significantly exceeds our theoretical guarantees. 
Our synthetic datasets are generated using a model that 
loosely resembles a random graph generated from an Erd\H{o}s-R\'enyi
model, and show that our algorithms  
perform very well even for small values of average degree of items.
This performance is explained by the high theoretical guarantees obtained in Theorem~\ref{thm:random} above. 
In particular, our algorithm for the \propfair\ problem outperforms the optimal solution in terms of size, owing to the allowed violation in fairness constraints.

Motivated by the above experimental results, we analyze our algorithm for the \matchlb\ problem on {\em Erd\H{o}s-R\'enyi random graphs}.
This is similar in spirit to the work of \cite{DyerFP93}, where they show that the greedy matching algorithm obtains an almost-optimal matching on Erd\H{o}s-R\'enyi random graphs, which are used as one of the models for real-world instances.
In our theoretical analysis of the algorithm for the \matchlb\ problem on an Erd\H{o}s-R\'enyi random graph,
we see a similar behaviour in the presence of lower bounds.
The random graph model involves a bipartite graph $G=(A\cup P, E)$ where for every $a\in A, p\in P$,
the edge $(a,p)$ exists in $E$ with probability $\rho$. 
We consider a simplistic scenario where there are $\Delta$ groups, each item belongs to one of the $\Delta$ groups, and the lower bound of each group is $\ell$.

\begin{theorem}\label{thm:random}
For any constant $\epsilon>0$, the greedy algorithm from Theorem~\ref{thm:intro-matching-lb} achieves
a $(1-\epsilon)$-approximation with probability $\big(1-\frac{1}{n^c}\big)$
for instances of the \matchlb\ problem
where 
all the lower bounds are fixed to a constant $\ell$, and 
the underlying graph is an
Erd\H{o}s-R\'enyi random graph
with an edge probability of $\rho=\Omega(\frac{\log n}{n})$. 
\end{theorem}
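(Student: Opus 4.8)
The plan is to bound the number of platforms satisfied by the greedy algorithm of Theorem~\ref{thm:intro-matching-lb} from below by comparing it with a purely combinatorial upper bound on $\mathrm{OPT}$. Write $n_k=|C^{(k)}|$; since the hypothesis $\max_{p_j}(\ell_j,\sum_k\ell_j^{(k)})=O(1)$ forces $\Delta$ and $\ell$ to be $O(1)$, and the $\Delta$ groups partition the $n$ items, I work under the (mild, and presumably intended) assumption that the groups are balanced, $n_k=\Theta(n)$ for every $k$; I also take $\ell_j=\sum_k\ell_j^{(k)}=\Delta\ell$, so that satisfying a platform consumes exactly $\ell$ items of each group. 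The first, elementary, observation is that $\mathrm{OPT}\le\min_k n_k/\ell$: any assignment satisfying $s$ platforms dedicates $\ell$ distinct items of $C^{(k)}$ to each of them, so $s\ell\le n_k$; trivially also $\mathrm{OPT}\le m:=|P|$.

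Next I would introduce a ``good event''. Fix $\epsilon_0\in(0,1)$ (ultimately $\epsilon_0=\epsilon$), and let $\mathcal{E}$ be the event that for every platform $p_j$ and every group $k$, at the instant $p_j$ is processed by greedy the following holds: if at least $\epsilon_0 n_k$ items of $C^{(k)}$ are still unused, then $p_j$ has at least $\ell$ unused neighbours in $C^{(k)}$. I claim $\Pr[\mathcal{E}]\to 1$. The key point -- and the reason the \emph{online} nature of the algorithm matters -- is that the state of greedy when $p_j$ arrives, in particular the set $S_{j,k}$ of unused items of $C^{(k)}$, depends only on the data of $p_1,\dots,p_{j-1}$ and is therefore independent of the edges incident to $p_j$. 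Conditioning on that state, $|N(p_j)\cap S_{j,k}|$ is distributed as $\mathrm{Binomial}(|S_{j,k}|,\rho)$ with mean at least $\epsilon_0 n_k\rho=\Omega(\epsilon_0\log n)$ when $|S_{j,k}|\ge\epsilon_0 n_k$, so a Chernoff tail bound gives $\Pr[|N(p_j)\cap S_{j,k}|<\ell]\le n^{-\Omega(1)}$, and a union bound over the $m\Delta$ pairs $(j,k)$ completes the claim (this is where one uses that the hidden constant in $\rho=\Omega(\log n/n)$ is large enough relative to the polynomial bounding $m$). This is the same mechanism as in the analysis of greedy matching on random graphs in~\cite{DyerFP93}.

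On $\mathcal{E}$ I would then show greedy satisfies at least $(1-\epsilon_0)\mathrm{OPT}$ platforms. If greedy satisfies all $m$ of them we are done, as $\mathrm{OPT}\le m$. Otherwise greedy fails on some $p_j$; under $\mathcal{E}$ this is possible only if, for some group $k$, fewer than $\epsilon_0 n_k$ items of $C^{(k)}$ were unused when $p_j$ arrived -- because if every group still had $\ge\epsilon_0 n_k$ unused items, then $p_j$ would have $\ge\ell$ unused neighbours in each group, and since $\ell_j=\Delta\ell$ greedy could satisfy it. Hence at that moment more than $(1-\epsilon_0)n_k$ items of $C^{(k)}$ had already been consumed, all of them by platforms greedy had previously satisfied, each consuming exactly $\ell$ items of $C^{(k)}$; so greedy had already satisfied more than $(1-\epsilon_0)n_k/\ell\ge(1-\epsilon_0)\mathrm{OPT}$ platforms. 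Combining the two cases yields the bound with high probability.

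I expect the main obstacle to be the concentration step: making rigorous that greedy's running state is independent of the arriving platform's edges (so that a fresh Binomial tail bound is legitimate), and pushing the union bound through for the full range of $m$ and the density $\rho=\Omega(\log n/n)$. The combinatorial core -- a failure forces a group to be nearly exhausted, which forces many prior successes -- is short once $\mathcal{E}$ is established. Removing the simplifying assumptions (unbalanced group sizes, or $\ell_j>\sum_k\ell_j^{(k)}$ so that greedy must also place ``free'' items) would require taking those free items from the currently largest group together with a slightly more careful accounting, which I would handle separately.
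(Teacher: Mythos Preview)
Your proposal is correct and follows essentially the same approach as the paper: upper-bound $\mathrm{OPT}$ by $n/\ell$ per group, use the independence of $p_j$'s edges from the algorithm's history to get a fresh binomial tail bound at each step, and union-bound over platforms and groups. The paper's write-up is slightly more compact---it directly argues that the first $(1-\epsilon)n/\ell$ platforms all succeed with high probability, rather than introducing your good event $\mathcal{E}$ and reasoning about the first failure---but the probabilistic and combinatorial content is identical, and your packaging is arguably cleaner in handling the edge case where fewer than $(1-\epsilon)n/\ell$ platforms exist.
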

Here, the constant $c$ depends on $\rho$ and $\epsilon$. The proof appears in Section~\ref{sec:exp-random-graphs}.
\subsection{Our Techniques}
The technical core of our theoretical results is a new connection between our problems and the {\em hypergraph matching problem} defined below.
\begin{definition}[Hypergraph Matching]
Given a $k$-uniform hypergraph 
$H=(V, E)$, find the largest matching, viz. a subset of edges that do not intersect. 
\end{definition}
In this paper, we use the folklore greedy algorithm
for hypergraph matching which involves repeatedly choosing a hyperedge
that is disjoint from the edges already included in the
matching, and adding it to the matching.  
\begin{proposition}\label{prop:folklore-greedy}
The greedy algorithm achieves an approximation
ratio of $\Delta$ on hypergraphs with hyperedges
of size $\leq \Delta$.
\end{proposition}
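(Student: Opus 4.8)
The plan is to prove Proposition~\ref{prop:folklore-greedy} by a standard charging argument comparing the greedy matching to an optimal one. Let $\sol$ denote the matching produced by the greedy algorithm and let $\sol^*$ denote a maximum matching in the hypergraph $H$, where every hyperedge has size at most $\Delta$. The key observation is that when the greedy algorithm terminates, every hyperedge of $H$ — in particular every hyperedge of $\sol^*$ — must intersect some hyperedge of $\sol$; otherwise the greedy algorithm could have added it, contradicting termination.

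Building on this, I would charge each hyperedge $e^* \in \sol^*$ to a hyperedge $g(e^*) \in \sol$ that shares at least one vertex with $e^*$ (picking one arbitrarily if there are several). The next step is to bound how many hyperedges of $\sol^*$ can be charged to a single $g \in \sol$. If $k$ distinct hyperedges $e^*_1, \dots, e^*_k$ of $\sol^*$ are all charged to the same $g$, then each $e^*_i$ contains a vertex of $g$, and since the $e^*_i$ are pairwise disjoint (being part of a matching), these witnessing vertices are distinct. Hence $|g| \geq k$, and since $|g| \leq \Delta$, at most $\Delta$ hyperedges of $\sol^*$ are charged to $g$. Summing over all $g \in \sol$ gives $|\sol^*| \leq \Delta \cdot |\sol|$.

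To get the claimed ratio of $\Delta + 1$ rather than $\Delta$, I would refine the charging so that each $g \in \sol$ is also charged to itself (equivalently, observe that $\sol$ and $\sol^*$ may share edges, or simply note that a hyperedge $g \in \sol$ either lies in $\sol^*$, contributing one unit directly, or has its own vertices blocking potential $\sol^*$-edges). The cleanest version: for each $g \in \sol$, the hyperedges of $\sol^*$ that intersect $g$ number at most $\Delta$ total (by the disjointness argument above, since they each grab a distinct vertex of $g$), and every $e^* \in \sol^*$ intersects some $g \in \sol$, so $|\sol^*| \le \sum_{g \in \sol} |\{e^* \in \sol^* : e^* \cap g \neq \emptyset\}| \le \Delta |\sol|$; the extra $+1$ comes from handling the edge case where we want the bound to be tight against small instances, so I would instead argue each $g$ blocks at most $\Delta$ edges of $\sol^* \setminus \{g\}$ and at most one edge equal to $g$, giving $|\sol^*| \le (\Delta+1)|\sol|$ after accounting for overcounting, or simply quote the bound in the looser $\Delta+1$ form that suffices for the downstream application.

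The main obstacle is purely bookkeeping: making the charging injective-enough that the same $\sol^*$-edge is not double-counted while still covering all of $\sol^*$, and deciding whether to aim for the sharp $\Delta$ bound or the slightly weaker $\Delta+1$ bound stated here. Since the paper only needs $\Delta+1$, I would keep the argument loose — assign each $e^* \in \sol^*$ to an arbitrary intersecting $g \in \sol$, bound the preimage of each $g$ by $\Delta$ via the distinct-vertex argument, and absorb any slack into the stated constant. No deep idea is required; the whole proof is two short paragraphs.
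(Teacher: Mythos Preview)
The paper does not actually prove this proposition; it is stated as ``folklore'' and used as a black box in the proofs of Theorems~\ref{thm:intro-matching-lb} and~\ref{thm:prop-fair}. So there is nothing to compare against on the paper's side.

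Your core argument in the first two paragraphs is correct and is exactly the standard proof: maximality of $\sol$ forces every $e^*\in\sol^*$ to hit some $g\in\sol$, and disjointness of the $e^*$'s forces the witnessing vertices inside $g$ to be distinct, so at most $|g|\le\Delta$ edges of $\sol^*$ can be charged to $g$. This already gives $|\sol^*|\le \Delta\,|\sol|$, which is \emph{stronger} than the $\Delta+1$ claimed in the proposition.

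The third paragraph is where you go off the rails, and unnecessarily so. You have already proved the $\Delta$ bound; since $\Delta\le\Delta+1$, the stated proposition follows immediately. Your attempt to manufacture an extra $+1$ (``each $g$ blocks at most $\Delta$ edges of $\sol^*\setminus\{g\}$ and at most one edge equal to $g$'') double-counts: if $g\in\sol^*$ then $g$ itself is one of the at most $\Delta$ edges of $\sol^*$ intersecting $g$, not an additional one. The paper's $\Delta+1$ is simply a loose statement of the folklore $\Delta$ bound (and in its applications, the hyperedges have size $\ell+1$, so the paper wants the ratio $\ell+2$, which it gets either way). Drop the third paragraph entirely, state that you have proved the stronger inequality $|\sol^*|\le\Delta|\sol|$, and note that this implies the proposition a fortiori.
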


\subsection{Related Work}\label{sec:related}
\textcolor{black}{The  problem of matchings with fairness constraints has been well-studied in recent years} and the importance of fairness constraints
has been highlighted in literature e.g. \cite{halevi_fair_allocation,luss_leximin_fair,devanur_ranking,vishnoi_fairness,KMM15,CHRG16,BCZSK16}. 
A lot of work in literature
has focused on group fairness constraints, modeled as upper bounds on the number of items from each group
that can be allocated to a platform. This is referred to as {\em restricted dominance} in literature \cite{bera_fair_clustering}. 
Constraints that are specified in terms of lower
bounds constraints have also been considered for different problems like clustering, minimum cut, knapsack (see e.g. \cite{AS21,BDSSTV21,LCCZ20} for some recent results). Bera et. al
\cite{bera_fair_clustering} consider {\em proportional fairness constraints} for clustering problems, and give an algorithm with additive violation of constraints. In this work, we consider them in the context of matchings. In ~\cite{SLNN21} the authors consider upper bounds for problems similar to ours. However, the techniques required for
the setting with lower bounds in our work are different from the ones used in \cite{SLNN21}.

There has been more recent work where different models of fairness in matchings have been considered. In \cite{GSB20}, individual fairness is addressed, in \cite{BDLE20}, two-sided fairness is considered in terms of utilities, whereas in \cite{MXX21}, group-fairness in terms of the minimum service rate across all groups is studied. In \cite{esmaeili2022rawlsian}, the authors consider group and individual Rawlsian fairness criteria in an online setting, where fairness is attained at the cost of a drop in operator's profit. In \cite{EfthymiouSPC21}, group fairness constraints have been considered for the {\em entity-resolution} problem.

Proportional fairness  has been considered for the {\em candidate selection problem} in \cite{BeiLPW22}.
In their setting, the input consists of $n$ candidates and $m$ properties. Each candidate can posses a subset of the $m$ properties and a property $p_j$ is associated with proportional fairness constraints $\alpha_j$, $\beta_j$. Additionally, the input contains a threshold $k$. The goal is to output a set of at most $k$ candidates which satisfies the proportionality constraints. The authors
show that even deciding whether there exists a non-empty feasible set of candidates is NP-hard. They complement the hardness by providing polynomial-time approximation algorithms with a slight violation of the proportional fairness constraints. 

Our proportionally fair matching problem can be viewed as selecting multiple committees instead of a single committee / subset. Thus their hardness and inapproximability hold for our problem as well. We therefore consider a special case where every candidate belongs to exactly one group. In this case, if we are interested in selecting a single committee, the problem becomes tractable, however, it remains NP-hard when there are multiple committees to be selected (Theorem~\ref{thm:hardness-kdim}).
An advantage of our algorithms is that they simply follow the greedy paradigm, and work in online setting as well. On the contrary, those of Bei et al. \cite{BeiLPW22} involve solving an LP or ILP.

In a recent work, \cite{BFIS23} consider the proportional fairness model in case of non-bipartite graphs. Their model involves colors on edges, and the proportional fairness constraints are stated in terms of two parameters $\alpha,\beta$. The goal is to construct a matching that has at least $\alpha$ and at most $\beta$ fraction of edges of each color. The running time of their algorithm is exponential in the number of colors, and the violation of proportional fairness parameters also depends on the number of colors. In another work, \cite{PLN22}, the authors consider fairness constraints in terms of lower and upper bounds for each group, and additionally consider {\em individual fairness constraints} denoting the bounds on probabilities with which an item must be matched to a subset of platforms. They output a distribution on group fair matchings so that a randomly picked matching satisfies the individual fairness constraints.

We note that the term {\em proportional fairness} has also been used in literature in a way different from ours. For instance, in \cite{suhr2019_ride_hailing}, online matchings for ride-hailing platforms have been considered, and the term proportional fairness is used to indicate that the utility that each agent gets should be proportional to the time that the agent spends on the ride-hailing platform. Also, the term {\em diversity} has been used in various ways in literature (see e.g. \cite{CDKV16,FGJPS20}). In \cite{BCHSZ18}, both sides of the bipartition belong to various groups, referred to as {\em types} and {\em blocks}. They give a max-utility assignment that satisfies all the fairness constraints, given only in terms of upper bounds. 

The online model for the bipartite matching problem was studied in \cite{kvv_online} and they gave a randomized algorithm with a competitive ratio $1-1/e$. This was later generalized to online bipartite matchings with concave returns in \cite{devanur2012online}. In \cite{DSSX19}, fairness and diversity in online bipartite matchings has been achieved via submodular weight function.

Apart from matchings, fairness constraints have been considered in clustering problems \cite{EBSD21}, where fair clustering has been reduced to assignment problems and bicriteria approximations are shown \cite{BKKRS019,bera_fair_clustering}.


\section{The \propfair\ Problem}\label{sec:prop-fair}
We discuss the Proportionally Fair Matching problem in this section.
Here, we want to maximize
the number of items matched to those platforms whose proportional fairness constraints are met.

\subsection{Algorithm with no violation of constraints}
Given an instance $G$ of the \PFM\ problem, the algorithm involves constructing an instance of the hypergraph matching problem on a hypergraph $H$ with vertex set $V$ and edge set $F$ as follows:
$V=\{v_t\mid a_t\in A\}\cup \{p_j\mid p_j\in P\}$	
 Thus, corresponding to an item $a_t \in A$, we add a vertex $v_t$ to the set $V$. For every platform $p_j$ 
 we add a vertex $p_j$ to $V$. We denote by  $p_j$ for a vertex in the hypergraph corresponding to the platform $p_j$.
 Furthermore, we let $C_j^{(i)}=C_i\cap N(p_j)$.
The edge set is $F=\{	\{p_j\} \cup S\mid S\subseteq N(p_j),  \ell_j\leq \abs{S}\leq u_j \quad\forall i\in[m] \ \  \alpha_j^{(i)}  \abs{S} \leq \abs{S\cap C_j^{(i)}}  \leq \beta_j^{(i)} \abs{S}\}$

Thus, there is a hyperedge in $H$ corresponding to each possible assignment of items to $p_j$ that meets the proportional fairness constraints of $p_j$ and all its groups.
Further, by the above construction, observe that any assignment that satisfies the platform must
	correspond to a hyperedge. The size of the largest hyperedge in $H$ is $u$ where $u = \max_j u_j$. 

By Proposition~\ref{prop:folklore-greedy}, this naive reduction to hypergraph matching described above leads to
an approximation ratio of $O(u)$ without violating any of the proportional fairness constraints.
Given the hardness of approximation for hypergraph matching we do not expect to improve the approximation
guarantee using this reduction.

\subsection{Improved approximation with violation of constraints: }
We would like an approximation factor that depends on
$\ell_j$ instead of $u_j$ since the former is typically much smaller
than the latter e.g. in student course allocation,
the number of students required to offer a course is typically small
whereas the maximum capacity of the course may be much larger.
In order to get an improved approximation factor, we pay a price in terms of a slight violation of the fairness constraints.
 In the proof of Theorem~\ref{thm:prop-fair} below, we construct a hypergraph $H$ such that the hyperedges corresponding to a platform $p_j$ have size $\ell_j$, thereby resulting in an approximation factor $O(\ell)$ 
 where $\ell = \max_j{\ell_j}$.

\begin{proof}[Proof of Theorem~\ref{thm:prop-fair}]

We reduce the given instance $G$ to a hypergraph matching problem on a suitably constructed hypergraph $H$. We then use the greedy algorithm to compute a maximal matching $M_H$ in $H$.
However, owing to rounding errors, the corresponding matching $M$ in $G$ may not exactly satisfy the constraints.

{\bf Construction of the hypergraph $H$:} 
	Recall that in this problem, we only consider the case where each item belongs to exactly one group
	per platform.
	For every item $a_i$, we create a vertex $v_i$ in $H$.
	For every platform $p_j$, we create $t$ vertices $u_j^{(1)},\ldots,u_j^{(t)}$ where $t=\left \lfloor \frac{u_{j}}{\ell_j} \right \rfloor $.
	We add the following hyperedges for platform $p_j$.
	\begin{align*}
		\bigg \{\{u_j^{(k)}\} \cup S \mid 1\leq k\leq t, S\subseteq N(p_j), \abs{S}=\ell_j, 
		S \text{ satisfies Eqn~\ref{eqn:fairness-condition-prop}} \bigg\}.
	\end{align*}
	
		\begin{align}\label{eqn:fairness-condition-prop}
		\alpha_j^{(i)} \ell_j-3  \leq & \abs{S\cap C_j^{(i)}} \leq  \beta_j^{(i)} \ell_j+3.
	\end{align}
	We need to find a set $S$ satisfying the above property. 
 For this, we keep adding items from a group $C_j^{(i)}$ to $S$ until
	$|S\cap C_j^{(i)}|\geq \alpha_j^{(i)}\ell_j-3$.
	We repeat this for all the groups.
	To ensure $|S|=\ell_j$, we add items arbitrarily
	so that $|S\cap C_j^{(i)}|\leq \beta_j^{(i)}\ell_j+3$ for each group $C_j^{(i)}$. 
    It is easy to see that if such a set $S$ exists, the above process must find it, and the disjointness of groups is crucial here.
    
 \noindent{\bf Improving the running time to $O(n^2)$:}
 The construction of $H$ needs time $n^{\ell}$. 
	However, we improve it to $O(n^2)$ as follows. Instead of explicitly constructing all the hyperedges of $H$, we create
	a collection of disjoint hyperedges by constructing hyperedges one by one, ensuring that the hyperedge being created is disjoint from the previously created ones. This results in a simple greedy matching $M_H$ of $H$. Algorithm~\ref{alg:prop-fair} shows this directly for the instance $G$.

	\begin{algorithm}
	\caption{Improved algorithm for \propfair}\label{alg:prop-fair}
	\begin{algorithmic}[1]
	\For{every platform $p_j$ that arrives online}
	\While{$\exists~ S\subseteq N(p_j)$ of size $\ell_j$ satisfying Equation~\ref{eqn:fairness-condition-prop}}
	    \State Match all items from $S$ to $p_j$ and remove them.
	\EndWhile
	\EndFor
	\end{algorithmic}
\end{algorithm}

	Let the matching $M_H$ contain $t'\leq t$ hyperedges corresponding to platform $p_j$, each containing a distinct vertex from $u_j^{(1)}, \ldots u_j^{(t)}$. Note that the items in these hyperedges are assigned to $p_j$ in the corresponding matching $M$ in $G$. Let this set be $M_j$. By the construction of $H$, we observe that if $t'\ge 1$, we have $\ell_j \le |M_j| \le u_j$. Moreover since every hyperedge satisfies Eq.~\ref{eqn:fairness-condition-prop}, we 
	can add Equation~\ref{eqn:fairness-condition-prop} across all the $t'$ hyperedges to get
	\begin{align*}
		\alpha_j^{(i)} \abs{M_j}-3t'  \leq &\abs{M_j\cap C_j^{(i)}}  \leq  \beta_j^{(i)} \abs{M_j} +3t'\\
		\left(\alpha_j^{(i)} -\frac{3}{\ell_j}\right)\abs{M_j}  \leq &\abs{M_j\cap C_j^{(i)}}  \leq  \left(\beta_j^{(i)} +\frac{3}{\ell_j}\right) \abs{M_j}.
	\end{align*}
	Thus, we violate the constraints by at most a factor of $\frac{3}{\ell_j}$.
	



	
We show the desired approximation ratio of $2(\ell+1)$ as follows. In Lemma~\ref{lem:prop-fair-correctness} below, we show that for any optimum matching $M_{OPT}$ in $G$, there exists a matching $\sol_H$ in $G$ such that $|\sol_H|\geq \frac{|M_{OPT}|}{2}$. Then we give a transformation to get a matching $\sol_H$ in $H$. The $2(\ell+1)$ approximation then follows from Proposition~\ref{prop:folklore-greedy}. The final solution
satisfies the relaxed fairness constraints in Equation~\ref{eqn:fairness-condition-prop}. 
\end{proof}
	
\begin{lemma}\label{lem:prop-fair-correctness}
	The hypergraph $H$ contains a matching $\sol_H$ such that
	the number of items covered by hyperedges in $\sol_H$ is at least half
	the value of the optimum of original instance of \propfair
	and violates the fairness constraints by at most
	a factor of $\frac{3}{\ell_j}$
	for each platform $p_j$. 
\end{lemma}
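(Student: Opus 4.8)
The statement packages together two things about the hypergraph $H$ built in the proof of Theorem~\ref{thm:prop-fair}: first, that $H$ contains a matching $\sol_H$ covering at least $|M_{OPT}|/2$ items, and second, that the corresponding assignment in $G$ violates each proportional fairness constraint by at most the multiplicative factor $3/\ell_j$. The second part is essentially already discharged in the proof above — any hyperedge of $H$ satisfies Equation~\ref{eqn:fairness-condition-prop} by construction, and summing that inequality over the $t'$ hyperedges incident to $u_j^{(1)},\dots,u_j^{(t)}$ gives exactly the claimed bound — so the real work is the counting argument for the first part. I would therefore spend almost all the effort there and treat the violation bound as a one-line appeal to the computation already carried out.

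**Main argument for the size bound.** Start from an optimum matching $M_{OPT}$ of Problem~\ref{prob:fair}, and for each satisfied platform $p_j$ let $M_j^{OPT}$ be the items assigned to it, so $\ell_j \le |M_j^{OPT}| \le u_j$ and the fairness constraints hold exactly. The idea is to ``slice'' $M_j^{OPT}$ into $\lfloor |M_j^{OPT}|/\ell_j \rfloor$ disjoint blocks of size exactly $\ell_j$ — leaving at most $\ell_j - 1 < \ell_j$ leftover items — and argue that each block can be taken to satisfy Equation~\ref{eqn:fairness-condition-prop}, hence forms a valid hyperedge of $H$ together with one of the $u_j^{(k)}$ vertices (there are $t = \lfloor u_j/\ell_j\rfloor \ge \lfloor |M_j^{OPT}|/\ell_j\rfloor$ of them, so we never run out of platform-copy vertices). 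Since $|M_j^{OPT}| \le u_j$ and each copy vertex $u_j^{(k)}$ is used at most once, and items are used at most once across all blocks of all platforms, the resulting collection of hyperedges is a genuine matching $\sol_H$ in $H$. The number of items it covers is $\sum_j \lfloor |M_j^{OPT}|/\ell_j\rfloor \cdot \ell_j \ge \sum_j (|M_j^{OPT}| - \ell_j) = |M_{OPT}| - \sum_j \ell_j$; and since every satisfied platform contributes $|M_j^{OPT}| \ge \ell_j$, we have $\sum_j \ell_j \le |M_{OPT}|$, giving $|\sol_H| \ge |M_{OPT}| - |M_{OPT}|/1$... — here one has to be slightly more careful: the clean way is to observe that for integers $x \ge \ell$ we have $\lfloor x/\ell\rfloor \ell \ge x/2$ (since $\lfloor x/\ell \rfloor \ge (x/\ell + 1)/2 \ge x/(2\ell)$ when $x \ge \ell$), so summing gives $|\sol_H| \ge \frac12 \sum_j |M_j^{OPT}| = \frac12 |M_{OPT}|$ directly, which is exactly the factor $1/2$ in the statement.

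**The one genuinely delicate step** is showing that each size-$\ell_j$ block of $M_j^{OPT}$ can be chosen to satisfy the relaxed constraint $\alpha_j^{(i)}\ell_j - 3 \le |S \cap C_j^{(i)}| \le \beta_j^{(i)}\ell_j + 3$. Because the groups partition the items (we are in the one-group-per-platform regime), $M_j^{OPT}$ splits as a disjoint union over groups, and the exact constraint says $\alpha_j^{(i)}|M_j^{OPT}| \le |M_j^{OPT} \cap C_j^{(i)}| \le \beta_j^{(i)}|M_j^{OPT}|$. I would argue that one can partition $M_j^{OPT}$ into blocks of size $\ell_j$ such that each block inherits the per-group counts in proportion to within an additive slack: pick the blocks greedily group by group, i.e.\ distribute the $|M_j^{OPT} \cap C_j^{(i)}|$ items of group $i$ as evenly as possible across the $m := \lfloor |M_j^{OPT}|/\ell_j\rfloor$ blocks; each block then gets either $\lfloor |M_j^{OPT}\cap C_j^{(i)}|/m\rfloor$ or $\lceil \cdot \rceil$ items of group $i$, which differs from $\frac{\ell_j}{|M_j^{OPT}|}|M_j^{OPT}\cap C_j^{(i)}|$ by a bounded amount — and since $m\ell_j$ and $|M_j^{OPT}|$ differ by less than $\ell_j$, a short computation shows the per-block count lies within $[\alpha_j^{(i)}\ell_j - 3, \beta_j^{(i)}\ell_j + 3]$. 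This additive-$3$ slack is precisely where the constant $3$ in Equation~\ref{eqn:fairness-condition-prop} comes from, so getting the arithmetic of this rounding right is the crux of the lemma; I expect it to require carefully tracking a floor and a ceiling against the rescaled fractional target, but nothing conceptually harder than that.

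**Finishing.** Once $\sol_H$ is exhibited as an honest matching in $H$, the violation bound on the induced assignment in $G$ is immediate from summing Equation~\ref{eqn:fairness-condition-prop} over the hyperedges sharing a common platform, exactly as in the displayed computation in the proof of Theorem~\ref{thm:prop-fair} above (replacing ``$t'$ hyperedges of $M_H$'' by ``the blocks of $\sol_H$ at $p_j$''), which turns the additive $3t'$ error into the multiplicative $3/\ell_j$ error since the block count $t'$ satisfies $t'\ell_j \le |M_j|$. That completes both assertions of Lemma~\ref{lem:prop-fair-correctness}.
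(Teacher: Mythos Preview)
Your high-level plan matches the paper's: start from $M_{OPT}$, for each satisfied platform $p_j$ keep $q\ell_j$ items where $q=\lfloor|M_{OPT,j}|/\ell_j\rfloor$, split those into $q$ blocks of size exactly $\ell_j$, and check that each block is a hyperedge of $H$. The factor-$\tfrac12$ accounting is also the same (your parenthetical justification $\lfloor x/\ell\rfloor\ge (x/\ell+1)/2$ is false---take $x=2\ell-1$---but the inequality $\lfloor x/\ell\rfloor\,\ell\ge x/2$ for $x\ge\ell$ that you actually need is correct, since with $x=q\ell+r$, $r<\ell$, $q\ge1$ one has $q\ell\ge r$).

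The genuine gap is in your ``delicate step''. You propose to distribute the $|M_{OPT,j}\cap C_j^{(i)}|$ items of each group as evenly as possible across $m=\lfloor|M_{OPT,j}|/\ell_j\rfloor$ blocks. But $\sum_i|M_{OPT,j}\cap C_j^{(i)}|=|M_{OPT,j}|=m\ell_j+r$ with $r>0$ in general, so per-group even distribution yields blocks of average size $\ell_j+r/m$, not $\ell_j$; and even after discarding $r$ items, independently rounding each group to $\lfloor y_i/m\rfloor$ or $\lceil y_i/m\rceil$ does not by itself force every block to have size \emph{exactly} $\ell_j$---the rounding choices across groups must be coordinated. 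You flag this as ``a short computation'', but it is precisely the step that carries the proof, and your sketch does not supply the mechanism.

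The paper handles this in two phases. First it trims $M_{OPT,j}$ down to $q\ell_j$ items by removing elements group-by-group so that each group count becomes $\lfloor\tfrac{q\ell_j}{q\ell_j+r}|M_{OPT,j}^{(i)}|\rfloor$ or the ceiling thereof; this costs an additive $1$ per group against the rescaled target. Second---and this is the idea you are missing---it sets up a \emph{fractional} bipartite assignment of the remaining items to $q$ hyperedge-slots (each slot receiving fractional mass $\ell_j$, each item mass $1$), arranged so that each group contributes at most two fractional edges to any slot, and then invokes integrality of the bipartite matching polytope to obtain an integral assignment with every slot of size exactly $\ell_j$. The ``at most two fractional items'' bound yields an additional additive $2$ per group per block, giving the total slack of $3$ in Equation~\ref{eqn:fairness-condition-prop}. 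This polytope-integrality step is what simultaneously enforces exact block sizes and bounded per-group deviation; your even-distribution heuristic would need to be replaced by (or shown equivalent to) an argument of this kind.
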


\begin{proof}
	Let $M_{OPT}$ be the optimum matching in $G$.
	We focus our attention on some platform $p_j$.
    Let it be matched
	to a set $M_{OPT,j}$ of items in $M_{OPT}$.
    Let $\abs{M_{OPT,j}}=q\ell_j+r$ for non-negative integers $q,r$ where $r< \ell_j$ and $q>0$.
    We construct a solution $\sol$ whose restriction to the neighbours of $p_j$, $\sol_j$, satisfies $\abs{\sol_j}=q\ell_j$ matched items.
    Further, for each group $C_j^{(i)}$ of $p_j$, $\sol_j^{(i)}$ satisfies the following slightly relaxed constraint.
	For each $i$, let $M_{OPT,j}^{(i)}=M_{OPT,j}\cap C_j^{(i)}$.
	Then $\sol_j$ needs to satisfy 
	$\abs{\sol_j}=q\ell_j$ and
	\begin{align*}
	    &\abs{\sol_j^{(i)}} = \left \lceil \frac{q\ell_j}{q\ell_j+r}\abs{M_{OPT,j}^{(i)}} \right \rceil \text{ or }
	    &\abs{\sol_j^{(i)}} = \left \lfloor \frac{q\ell_j}{q\ell_j+r}\abs{M_{OPT,j}^{(i)}} \right \rfloor.
	\end{align*}
	This can easily be done by removing vertices one by one from $M_{OPT,j}^{(i)}$ for
	various $i$.
	Then we have
	\begin{align*}
		\frac{q\ell_j}{q\ell_j+r}\abs{M_{OPT,j}^{(i)}} -1  \leq & \abs{\sol_j^{\textcolor{black}{(i)}}}  \leq \frac{q\ell_j}{q\ell_j+r}\abs{M_{OPT,j}^{(i)}} +1
	\end{align*}
	\begin{equation}\label{eqn:SH_hypergraph}
	    \implies \alpha_j^{(i)} q\ell_j - 1 \leq   \abs{\sol_j^{(i)}} \leq  \beta_j^{(i)} q\ell_j+1.
	\end{equation}	
 Thus, there is a solution of size $q\ell_j$ that violates the lower and upper bounds
	by at most $1$.
	Note that we can repeat this for each satisfied platform, and this process may possibly reduce the size of the matching by a factor of $2$, i.e., $\abs{\sol}\geq \frac{1}{2}\abs{M_{OPT}}$.
    This is because, the loss is $r \leq \frac{q\ell_j+r}{2} $ because $r<\ell_j$ and
    $q>0$.
	
	Now, we want to construct a matching $\sol_H$ in $H$ from the solution $\sol$. 
		To do this, we divide the above $q\ell_j$ items in $\sol_j$ into $q$ hyperedges  
	of size $\ell_j$ each such
	that each hyperedge satisfies Equation~\ref{eqn:fairness-condition-prop}. These hyperedges together form the matching $\sol_H$ in $H$. Through the arguments given below, we show that this is
	possible.
	
	We do this by setting up a bipartite matching instance between items and hyperedges.
	We create a vertex for each item and hyperedge.
	Consider the $x$ items from some group $C_j^{(i)}$ that we need to assign to $q$ hyperedges.
	We fractionally match ${x_i}/{q}$ items from that group to each hyperedge.
    Observe that this can be done in a way so that each hyperedge has at most two fractional items 
    matched to it from one group: We imagine this as dividing the real interval between $[0,x]$
    into $q$ divisions. Each division can then be viewed as the union of a set of `integral' intervals of the form $[z,z+1]$ for some integer $z$ along with
    at most two `fractional' intervals (corresponding
    to items being fractionally being matched).
    We then repeat this for all the groups.
    For example, to assign $7$ items $\{i_1,\ldots, i_7\}$ to $3$ hyperedges $f_1,f_2,f_3$,
	we assign $i_1,i_2$ and $\frac{1}{3}$ of $i_3$ to $f_1$,
	the remaining $\frac{2}{3}$ 
	of $i_3$, whole $i_4$, and $\frac{2}{3}$ of $i_5$ to $f_2$, and the remaining $\frac{1}{3}$ of $i_5$ and whole of $i_6,i_7$ to $f_3$
	
	Now, we consider the fractional bipartite matching between the items and hyperedges.
	The edges have weight corresponding to the fraction of the item that was
	assigned. 
	Observe that this graph admits a fractional matching 
	with the property that on the item side, 
	the weights of edges adjacent to a vertex
	sum up to $1$ and on the side of hyperedges, the weights of the adjacent
	edges of every vertex sum up to $\ell_j$.
	From the integrality of the bipartite matching polytope
	\cite{plummer1986matching},
	there is an underlying integral matching such that
	every item is assigned to exactly $1$ hyperedge and every hyperedge is assigned
	exactly $\ell_j$ items.
	We create our hyperedges based on this matching.
	
	Since there are only two fractional items from a group 
	with an edge to a hyperedge, it can only
	violate the constraints by at most an additive factor of $2$. 
    Thus, we end up with hyperedges $S$ that satisfy the constraint
	\begin{align*}
		\frac{\abs{\sol_j^{(i)}}}{q}-2  \leq & \abs{S\cap C_j^{(i)}} \leq  \frac{\abs{\sol_j^{(i)}}}{q}+2
		\implies \alpha_j^{(i)} \ell_j-\frac{1}{q}-2  \leq \abs{S\cap C_j^{(i)}} \leq  \beta_j^{(i)} \ell_j+\frac{1}{q}+2.
	\end{align*}
	The hyperedges here, with addition of a distinct $u_j^{(i)}$ vertex, $1\leq i\leq q$,
	form a disjoint collection of hyperedges in $H$, thus giving the matching $\sol_H$ in $H$.
	Since our algorithm will not know the value of $q$ beforehand, we choose the worst case
	of $q=1$.
	Thus, the hyperedges satisfy
	\begin{align*}
		\alpha_j^{(i)} \ell_j-3  \leq & \abs{Z\cap C_j^{(i)}} \leq  \beta_j^{(i)} \ell_j+3.
	\end{align*}
 
\end{proof}


We note that the high-level idea of breaking a problem with group-fairness constraints into ``smaller''problems has been studied in the context of other problems as well such as fair clustering \cite{CKLV17}, fair rankings~\cite{GDL21} etc. However, doing so is problem specific and there is no known generic way of doing this for any problem. 

\section{Matchings with Diversity Constraints}\label{sec:lb}
We give a reduction from the \matchlb\ problem to the {\em Hypergraph Matching Problem} which implies an approximation
algorithm for the \matchlb\ problem, and thereby prove Theorem~\ref{thm:intro-matching-lb}. As in Section~\ref{sec:prop-fair}, we use the notation $C_j^{(i)}=N(p_j)\cap C_i$.

\begin{algorithm}
\begin{algorithmic}[1]
	\caption{Algorithm for \matchlb}\label{alg:max-satisfied}
	\For{platform $p_j$ that arrives online}
	
	\State Greedily construct a set $S\subseteq N(p_j)$ such that
		\begin{align*}
		\ell_j\leq \abs{S}\leq \max( \ell_j, \sum_i \ell_j^{(i)}) \textrm{ and }
		\abs{S\cap C_j^{(i)}}\geq  \ell_j^{(i)} \quad \forall~i\in [m]
		\end{align*}
	\State Match all items from $S$ to $p_j$ and remove them.
	
	\EndFor
\end{algorithmic}

\end{algorithm}

\begin{proof}[Proof of Theorem~\ref{thm:intro-matching-lb}]
Given an instance of \matchlb on a bipartite graph $G$ with parts $(A,P)$, we construct an instance of the hypergraph matching problem on a hypergraph $H$ with vertex set $V$.
	We add one vertex $v_i$ in $V$ corresponding to every item $a_i\in A$, and a new vertex $u_j$ in $V$ for each platform $p_j\in P$. Let $N(p_j)$ be the neighbourhood of platform $p_j$ in $G$.
	Then we add the following hyperedges to the graph

 	\begin{align*}
	\bigg\{	\{u_j\} \cup S\mid S\subseteq N(p_j),& \ell_j \leq \abs{S} \leq \max(\ell_j,\sum_k\ell_j^{(k)}) \quad \forall~k, \abs{S\cap C_j^{(k)}}\geq \ell_j^{(k)} \bigg\}.
	\end{align*}

Thus, there is a hyperedge in $H$ corresponding to each possible assignment of items to $p_j$ that satisfies the lower bounds of $p_j$ and all its groups.
	Observe that the largest hyperedge in the resulting hypergraph has size $\ell+1$. 
	It is immediate that every hyperedge in the matching corresponds to an assignment of 
	items to a platform that satisfies the platform.
	Further, observe that any assignment that satisfies the platform must
	correspond to a hyperedge.

	As stated earlier, we can now use the hypergraph matching greedy algorithm.
 Observe
	that the number of hyperedges can be as large as $O(n^\ell)$ and hence the time complexity would be as high too.
	However, we achieve $O(n^2)$ time complexity as follows. 
	Instead of explicitly constructing all the hyperedges, we keep constructing and picking one arbitrary hyperedge at a time that is disjoint from the previous ones.
	This is given in Algorithm~\ref{alg:max-satisfied} in terms of the setting in \matchlb. Thus for each platform $p_j$, we keep adding items from each group to a set $S_j$ until 
	the lower bound of the group is met by the items in $S_j$.
	Once the lower bounds of all the groups are met, if $|S_j|<\ell_j$,
	we add items arbitrarily to $S_j$ until $|S_j|=\ell_j$. 
 We ensure that an item is picked at most once.
	In the worst case, $|S_j|=\max(\ell_j,\sum_k\ell_j^{(k)})$.
	The theorem then follows via Proposition~\ref{prop:folklore-greedy}. 
\end{proof}



\section{Hardness Results}\label{sec:hardness}
We prove Theorem~\ref{thm:hardness-kdim} by giving a reduction from the {\em hypergraph matching} problem to \matchlb. 
The hardness of approximation for hypergraph matching, shown in \cite{hazan03_kdm}, then implies that the 
\matchlb is \sfNP-hard to approximate within a factor of $\frac{\ell}{\ln \ell}$. This hardness result holds even when there are no groups, and all the platforms have the same lower bound $\ell$. 

\begin{proof}[Proof of Theorem~\ref{thm:hardness-kdim}]
The hardness reductions are described below.

{\bf Reduction for the \matchlb\ problem: }
The reduction involves constructing an instance of the \matchlb\ problem from a given hypergraph matching instance $H=(F,V)$, which is a $k$-uniform hypergraph.
We construct a bipartite graph $G$ with parts $(A,P)$ as follows: Set $\ell=k$. The set $A$ contains an item $a_i$ for every vertex $v_i\in V$, the set $P$ contains a platform $p_j$ for every $e_j\in F$, and the edges of $G$ are all pairs $(a_i, p_j)$ such that $v_i\in e_j$ 
	The lower bound of $p_j$ is set to the size of $e_j$ i.e. $\ell$, and we have no groups in this case.

Now we show that, for every matching $M$ in $H$ of size $r$, there is an assignment of items to platforms that satisfies the lower bounds of $r$ platforms and vice versa.
Suppose there is a matching $M$ in $H$ of size $r$. Then for every hyperedge $e_j\in M$, we assign
items $\{a_i \mid v_i\in e_j\}$ to the corresponding platform $p_j$. This meets the lower bound of $p_j$, since the size of $e_j$ is equal to the lower bound of $p_j$ for each $j$.
Thus, since every $e_j$ has exactly $k$
vertices, and the edges in $M$ do not share vertices, we get an assignment
	of items to platforms that satisfies the lower bounds of $r$ platforms.
Similarly, suppose there is an assignment of items to platforms that satisfies
the lower bounds of $r$ platforms. Hence the assignment results in $r$ platforms that are satisfied. Let $p_j$ be some platform that is satisfied.
Then we choose the corresponding hyperedge $e_j$ into the solution. Since each item is matched to exactly one platform, corresponding hyperedges in the solution are disjoint.
This gives us a matching in the hypergraph, thereby establishing a one-to-one correspondence between matchings in $H$ and the matchings in $G$ satisfying all the lower bounds.

\noindent{\bf Reduction for the \propfair\ problem: }
The above reduction also gives a hardness for the \propfair\ problem as follows. After constructing the bipartite graph $G$ as above, for each platform $p_j$, set $u_j=\ell_j=k$. Further, set the number of groups to be $1$. Thus we have $C_j^{(1)}=N(p_j)$ for each platform $p_j$.
Also, set $\alpha^{(1)}_j=0,\beta^{(1)}_j=1$. It can be easily seen that the number of
items matched to satisfied platforms is precisely $k$ times the number
of satisfied platforms. The one-to-one correspondence between the set of matchings in $H$ and the set of matchings in $G$ that satisfy the proportional fairness constraints can be established as in the case of the \matchlb\ problem.
\end{proof}

\section{Analysis on Random graphs}\label{sec:exp-random-graphs}
\begin{proof}[Proof of Theorem~4]

We consider the Erd\H{o}s-R\'enyi random graph on the vertex set $A\cup P$. Moreover, we consider the case where each item belongs to exactly one group. Suppose there are $\Delta$ groups, and there are $n$ items per group. Hence there are $n\Delta$
items overall i.e. $|A|=n\Delta$. Let each platform have the lower bound $\ell$ for each group. Thus, the number of satisfied platforms can be at most $\frac{n}{\ell}$, and hence we take $|P|=\frac{n}{\ell}$.
Let $\rho$ be the probability
that any given edge is in the
graph. We will fix the value of $\rho$ later.

Recall that, in the execution of Algorithm~\ref{alg:max-satisfied} from Section~\ref{sec:lb}, 
the platforms are considered in some fixed order independent of the instance.
For each platform $p_j$, among the items from $N(p_j)$ which are not yet matched to any platform, the algorithm arbitrarily matches $\Delta\ell$ items to $p_j$, with exactly $\ell$
from each group. If this is not possible, it leaves platform $p_j$
unmatched.

Consider the stage in the execution of Algorithm~\ref{alg:max-satisfied} when it has considered $i-1$ platforms so far. 
Let $m_i$ be the number of platforms whose lower bounds
have been satisfied so far. 
Then we want to find the probability that the $i^{th}$ platform $p_i$ can be satisfied by the items that are yet unassigned. 
Note that there are $n-m_i\ell$ unmatched items from each group at this stage.
Platform $p_i$ can be satisfied if at least $\ell$ unmatched items from every group are in $N(p_i)$.
Let $X_j$ denote this event for the $j$th group, $1\leq j\leq \Delta$ where groups are ordered $1,\ldots,\Delta$ arbitrarily.
For a fixed $j$, 
\begin{align*}
  Pr[X_j] \geq 1-&(1-\rho)^{n-m_i\ell}-\ldots \binom{n-m_i\ell}{\ell-1}(1-\rho)^{n-(m_i+1)\ell+1}\rho^{\ell-1}\\
    &\geq 1-\ell (1-\rho)^{n-m_i\ell} \geq 1-\ell e^{\rho(m_i\ell-n)}
\end{align*}
Across all the $\Delta$ groups, the probability that at least $\ell$ unmatched items from each group belong to $N(p_i)$ is given by
\begin{align*}
   Pr[\bigcap_{j=1}^\Delta X_j] \geq \left(1-\ell e^{\rho(m_i\ell-n)}\right)^\Delta \geq 1-\Delta \ell e^{\rho(m_i\ell-n)}.
\end{align*}
Note that $m_i\leq i$. We choose $\rho=\frac{\alpha\log n}{n}$ for some constant $\alpha$.
Then, whenever $i\leq (1-\epsilon)\frac{n}{\ell}$, 
the above probability
is $1-\frac{1}{n^{\alpha\epsilon}}$ which is 
$1-\left(\frac{1}{n^d}\right)$ 
for constant $d=\alpha\epsilon$. 
Thus, with $1-\frac{1}{n^d}$ probability, the $i$th platform can be satisfied.
Taking a union bound over the first $(1-\epsilon)\frac{n}{\ell}$ platforms,
it follows that the algorithm can satisfy all of them with at least $1-\frac{1}{n^{d-2}}$ probability.
Thus, the greedy algorithm achieves an approximation ratio of at least $(1-\epsilon)$
with probability $1-\frac{1}{n^c}$ where $c=d-2$.
\end{proof}

\section{Experiments}
\label{sec:experiments}

We present the results of our experiments for Algorithm~\ref{alg:prop-fair} and  Algorithm~\ref{alg:max-satisfied}  for the \propfair\ and the \matchlb\ problems  respectively. 
We evaluate the algorithms on two kinds of datasets. The first one is a smaller
dataset containing anonymized data of the course allocation
process at IIT Madras, labelled Real-1 through Real-3.
The second is synthetic data generated using a random process 
resembling a random graph generated from an Erd\H{o}s-R\'enyi
model. Our code and dataset are available at \cite{repository-link}.

\noindent {\bf Data Sets and Setup:} Each of the three real-world datasets (Real-1, Real-2, Real-3), have around 3000 students
and 100 courses. Each course has a lower quota of $5$, 
denoting the minimum number of students needed to operate the course. As an instance of \matchlb, we would like to maximize the number of courses that satisfy this requirement.
The courses are all from an elective category, so each student is assigned only
one course.
Students are partitioned into groups based on their majors, and
there are $5$ groups overall.
The synthetic datasets contain 250 courses and 10,000 students.
These were generated as follows. For every student, a degree was chosen uniformly at random between 1 and an input parameter. All experiments were done on a desktop running 64-bit Windows using a 3.6 GHz Intel i7-7700 processor with 32.0 GB RAM. 

In our experiments for both problems our algorithm is denoted as ALG1 in the respective sections. We use two standard heuristics that
improve the performance in practice, though they do not improve the worst-case theoretical guarantees because of the hardness results.
The first heuristic (denoted as ALG2) is to prioritize matching the lowest-degree item to a platform when considering its neighbors.
The second one (denoted as ALG3) is to use augmenting paths 
when a platform is left with no unmatched
items among its neighbors.
We compare the optimum value obtained via solving an ILP with the
output of our algorithm. We present our ILPs for the two problems in the sections below.

\subsection{\propfair}
We use the following ILP for the proportionally fair matching problem:
The input bipartite graph is $G=(A\cup P,E)$
\begin{eqnarray}
\textrm{Maximize }\sum_{(a, p)\in E} x_{(a,p)} & &\textrm { subject to }\nonumber\\
 \alpha_p^{(k)} \cdot \sum_{(a,p)\in E} x_{(a,p)} & \le &\sum_{(a,p)\in E, a\in C^{(k)}_p} x_{(a,p)}\label{eqn:lb} \\ 
\sum_{(a,p)\in E, a\in C^{(k)}_p} x_{(a,p)} & \le & \beta_p^{(k)} \cdot \sum_{(a,p)\in E} x_{(a,p)} \label{eqn:ub} \\
\sum_{(a,p)\in E} x_{(a,p)} & \le & 1 \textrm{ for each }a\in A \label{eqn:assignment} \\
x_{(a,p)}  \in  \{0,1\} & &\textrm{ for each }(a,p)\in E \nonumber
\end{eqnarray}

The proportional fairness constraints (\ref{eqn:lb}) and (\ref{eqn:ub}) are for every group $C_p^{(k)}$ for each platform $p$ whereas the constraint (\ref{eqn:assignment}) encodes that each applicant is matched to at most one platform.\footnote{This ILP can be found in the file named ILP\_student\_max.py in our submitted source code.}
We run our experiments for \propfair\ on a smaller dataset owing
to a limitation in computational resources. We have 100 courses and
2000 students. Every course has the same set of groups.
There are 20 groups, each containing 100 students.
Every course has a lower bound of 10 students overall.
Every course $p_j$ and every group $k$ have $\alpha_j^{(k)}=0.025$
and $\beta_j^{(k)}=0.1$.
We  note that our algorithms solve a relaxed instance.
This explains the strange phenomenon that our algorithm outperforms
the OPT in some instances. The OPT is the optimal
solution to a more constrained instance.

\begin{figure}[ht]
    \centering
    \scalebox{0.8}{\includegraphics[width=\linewidth]{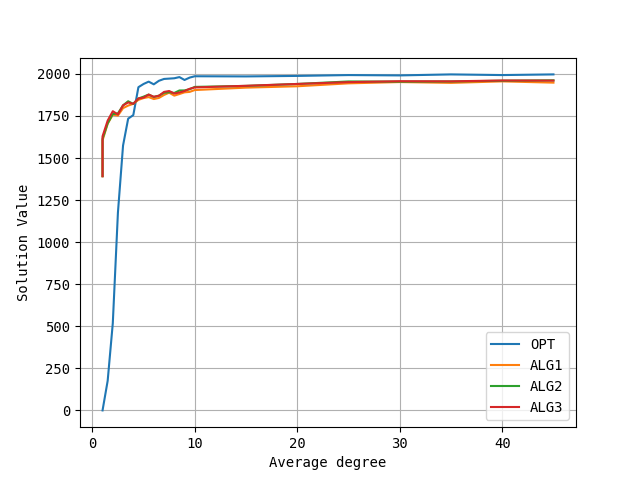}}
    \caption{\propfair: Solution value vs average degree of the random graph.}
\end{figure}

\begin{figure}[ht]
    \centering
    \scalebox{0.8}{\includegraphics[width=\linewidth]{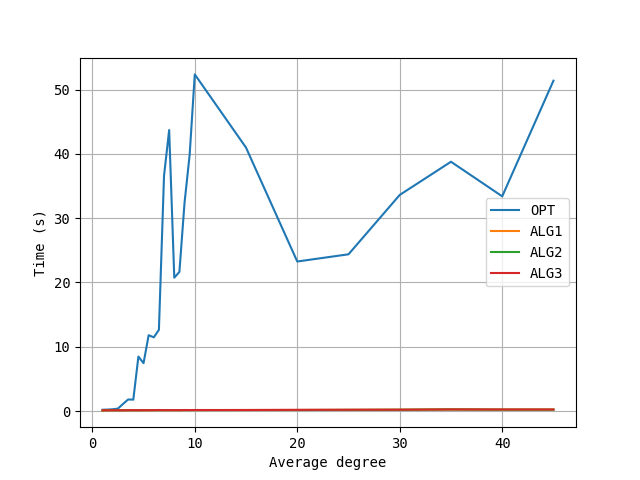}}
    \caption{\propfair: Running time vs average degree of the random graph.}
\end{figure}

\subsection{\matchlb}
Following is the ILP for the diverse matching problem: 

\begin{eqnarray*}
\textrm{Maximize }\sum_{p\in P} y_p & &  \textrm { subject to }\nonumber\\
\sum_{(a,p)\in E} x_{(a,p)} & \geq & \ell_p\cdot y_p \textrm{ for each }p\in P\\
\sum_{(a,p)\in E, a\in C^{(k)}_p} x_{(a,p)}& \geq & \ell^{(k)}_p \cdot y_p \textrm{ for each group }C^{(k)}_p \textrm{ of }p\\
\sum_{(a,p)\in E} x_{(a,p)} & \le & 1 \textrm{ for each }a\in A \\
y_p & \in & \{0,1\} \textrm{ for each }p\in P\\
x_{(a,p)} & \in & \{0,1\} \textrm{ for each }(a,p)\in E
\end{eqnarray*}

We run our experiments for \matchlb\ on  real world and synthetically generated instances where
the number of courses is 250 and the number of students is 10,000.
There are 20 groups,
each containing 500 students. Every course has a lower bound
of 2 for each group. We vary the average degree of the students
from 1 through 125.

All values were averaged over 15 runs. See Table~\ref{tab:table1} and Table~\ref{tab:table2} for solution value and running time comparison between our algorithms and OPT.

\begin{table}[!ht]
\centering
\begin{tabular}{@{}|ccccc|@{}}
\toprule
Dataset & OPT & ALG1   & ALG2   & ALG3   \\ \midrule
Real-1  & 34   & 29.53 & 31.73 & 33.20   \\
Real-2  & 31   & 27.73 & 29.40   & 30.86 \\
Real 3  & 31   & 26.07 & 28.73 & 30.00     \\ \bottomrule
\end{tabular}
\caption{\matchlb: Comparison of solution values of (ALG1, ALG2, ALG3) and an ILP that finds the Optimum (OPT). }
\label{tab:table1}
\end{table}

\begin{table}[!ht]
\centering
\begin{tabular}{@{}|crrrr|@{}}
\toprule
Dataset & \multicolumn{1}{c}{OPT1} & \multicolumn{1}{c}{ALG1} & \multicolumn{1}{c}{ALG2} & \multicolumn{1}{c|}{ALG3} \\ \midrule
Real-1  & 0.37                    & 0.11                    & 0.12                    & 0.12                      \\
Real-2  & 0.34                    & 0.11                    & 0.12                    & 0.12                     \\
Real 3  & 0.42                    & 0.13                   & 0.12                    & 0.12                     \\ \bottomrule
\end{tabular}
\caption{\matchlb: Runtime (in seconds) comparison between ALG1, ALG2, ALG3 and an ILP that finds the optimum (OPT). }
\label{tab:table2}
\end{table}

We observe that 
beyond a small threshold, 
the algorithm performed almost as well as the ILP. 
Our algorithm is much faster the ILP, particularly
for dense graphs. 
See Figure~\ref{fig:synth-prob1-b} for a trend on
performance of algorithm vs the average degree.
\begin{figure}[!ht]
    \centering
    \scalebox{0.8}{\includegraphics[width=\linewidth]{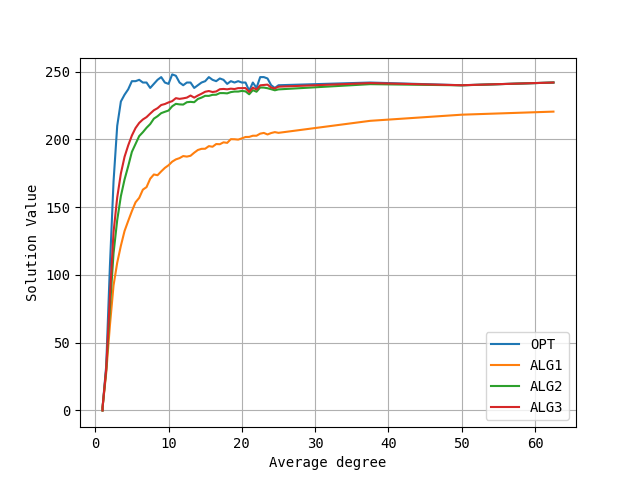}}
    \caption{\matchlb: Solution value vs average degree of the random graph for synthetic datasets.}
    \label{fig:synth-prob1-a}
\end{figure}
\begin{figure}[!ht]
    \centering
    \scalebox{0.8}{\includegraphics[width=\linewidth]{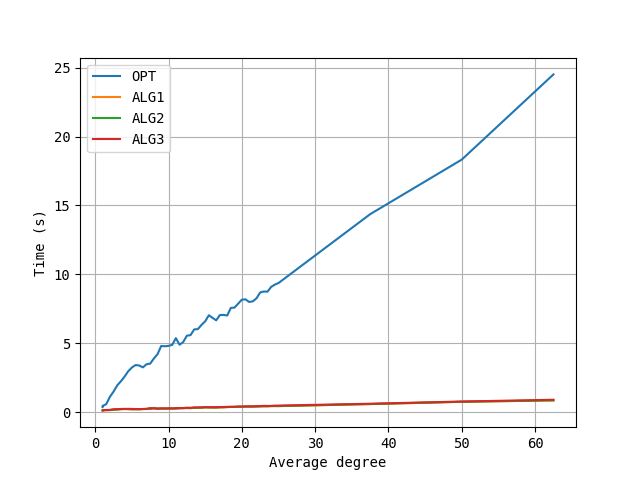}}
    \caption{\matchlb: Running time vs average degree of the random graph for synthetic datasets.}
    \label{fig:synth-prob1-b}
\end{figure}

\section{Discussion}
In this paper, we studied bipartite matching problems with proportional fairness constraints and diversity constraints. 
To the best of our knowledge, 
these constraints have not been considered particularly with the objective to maximize the number 
of platforms whose constraints are satisfied. 
Our algorithms exploited a connection to the hypergraph matching problem. 
Our algorithms generalize to the setting where each platform  defines its own groups on the set of items. 

Our approximation algorithm for the \propfair\ problem violates the fairness constraints by a small amount; obtaining a polynomial-time $O(\ell)$-approximation algorithm 
without violating the fairness constraints 
is an interesting open question.

\newpage
\bibliographystyle{plain}
\bibliography{references}
\end{document}